\documentclass[letterpaper, 10 pt, conference]{ieeeconf}  

\IEEEoverridecommandlockouts                              
\usepackage{amsmath} 
\usepackage{amssymb}  
\usepackage{algorithm}
\usepackage{algpseudocode}
\usepackage{threeparttable}

\newtheorem{lemma}{Lemma}
\newtheorem{proposition}{Proposition}

\usepackage{booktabs}
\usepackage{multirow}
\usepackage{mwe}
\usepackage{svg}

\usepackage{acronym}
\usepackage{tikz,pgfplots} 
\pgfplotsset{compat=1.18}
\usepackage{macros}

\title{\LARGE \bf
Experimental Design for Controller Selection in Synthetic Biology
}

\author{Eric Palanques-Tost$^1$, Ron Weiss$^2$, Calin Belta$^3$
\thanks{$^1$E. Palanques-Tost is with Boston University, Boston, MA, USA {\tt\small ericpt@bu.edu}}
\thanks{$^2$R. Weiss is with the Massachusetts Institute of Technology, Cambridge, MA, USA {\tt\small rweiss@mit.edu}}
\thanks{$^3$C. Belta is with the University of Maryland, College Park, MD, USA {\tt\small calin@umd.edu}}
}

\begin{document}

\bstctlcite{BSTcontrol}

\maketitle
\thispagestyle{empty}
\pagestyle{empty}


\begin{abstract} 
Synthetic biology enables the design of genetic circuits that act as feedback controllers. These controllers are typically designed using computational models, but mismatch between model and real dynamics can lead to controllers that fail in practice. While methods to address this issue exist, synthetic biology introduces additional structural constraints. Genetic circuits are often highly constrained by experimental limitations, reducing controller design to selection among a limited set of implementable circuits rather than an optimization over a continuous space. As a result, multiple system hypotheses may lead to the same optimal controller within the implementable set. Reducing model uncertainty may therefore be irrelevant when the models lead to the same optimal controller. In this paper, we exploit this structure to develop an algorithm for controller selection in synthetic biology, formulating the problem as a decision-oriented experimental design problem over a finite controller set. We represent plant uncertainty using a set of hypotheses and select experiments to minimize the posterior controller selection risk, rather than global model uncertainty. Across three mechanistic case studies, our method reaches the stopping criterion in fewer experimental rounds than model uncertainty and random experiment selection policies, while maintaining a comparable success rate.
\end{abstract}

\section{Introduction}

Living cells continuously sense molecules in their environment and produce molecular signals in response,  influencing other cells around them. Through these interactions, cells regulate biological systems and their dynamical responses to the environment. This behavior is analogous to how electronic feedback controllers interact with dynamical systems in closed loop to stabilize them or guide them toward a desired behavior. In recent years, synthetic biology has made it possible to modify cellular regulatory interactions through genome editing, allowing researchers to engineer control functions in genetic networks. Examples include population control circuits that stabilize cell density~\cite{youProgrammedPopulationControl2004}, metabolic control circuits that regulate production of desired metabolites~\cite{guptaDynamicRegulationMetabolic2017}, and biomolecular circuits implementing integral feedback to achieve robust homeostasis (i.e., maintain a regulated variable at a desired level despite disturbances)~\cite{briat2016-antitheticFeedback}. In these systems, the implemented genetic circuit senses molecular signals from the cell and produces regulatory molecules that modify cellular behavior, forming a closed-loop dynamical system.

Designing such biological controllers is typically done \textit{in silico}, using computational models of the biological system to predict the behavior of candidate controllers before experimental  implementation~\cite{palanques-tostSTLbasedOptimizationBiomolecular2025}. However, the dynamics of real biological systems are only partially understood, and models often simplify interaction mechanisms or neglect relevant molecular species. As a result, controllers optimized in simulation may fail when deployed experimentally. This motivates methods that account for plant uncertainty in controller design.

Several frameworks address uncertainty in control. Robust control constructs controllers that perform acceptably across a range of possible models, often at the cost of conservatism~\cite{gutmanRobustAdaptiveControl2003, petersenRobustControlUncertain2014}. Bayesian Experimental Design (BED) instead identifies  informative experiments that reduce system uncertainty~\cite{fosterUnifiedStochasticGradient2020}. 
Recent goal-oriented or decision-focused approaches select experiments to reduce uncertainty relevant to a downstream task, rather than improving model accuracy globally~\cite{kandasamyMyopicPosteriorSampling2019, huangAmortizedBayesianExperimental2024, zhongGoalOrientedBayesianOptimal2026}. 

In synthetic biology, however, controller implementation introduces additional constraints.
First, controllers often cannot be freely parameterized on a continuous scale, but must be constructed from a limited set of biological components, resulting in a finite library of implementable genetic circuits. Second, control objectives are often specified in terms of acceptable qualitative regimes or concentration ranges, rather than precise setpoints. As a consequence, different models of the system may lead to the same optimal controller within the feasible library. Reducing model uncertainty may therefore have little impact on controller selection when the competing models agree on the same optimal controller. 

In this paper, we formulate biological controller selection as a specialization of goal-oriented experimental design for closed-loop dynamical systems. We represent plant uncertainty using a finite set of candidate hypotheses and restrict controller design to a finite library of implementable genetic circuits. We maintain a posterior over plant hypotheses and select experiments based on their expected reduction in controller regret. Therefore, rather than identifying the true system dynamics, experiments are selected to resolve ambiguity in controller selection.

Specifically, the contributions of this paper are:
\begin{itemize}
	\item We formulate controller selection in synthetic biology as a goal-oriented experimental design problem over a finite library of implementable controllers. 
    \item We develop an algorithm for controller selection that directs experiments toward uncertainties affecting controller selection.
    \item We evaluate the method in three mechanistic case studies against model-uncertainty and random experiment-selection policies.
\end{itemize}

\section{Related Work} \label{sec:related-work}
Designing controllers that perform well under model uncertainty is a central problem in control. Robust and adaptive control address model uncertainty by synthesizing controllers that maintain acceptable performance across plausible system behaviors or disturbances~\cite{gutmanRobustAdaptiveControl2003, petersenRobustControlUncertain2014}. Identification-for-control emphasizes that system identification should be guided by the downstream control objective rather than model accuracy alone~\cite{geversIdentificationControlEarly2005}. 

Bayesian Experimental Design (BED) provides a general framework for selecting informative experiments through data acquisition~\cite{fosterUnifiedStochasticGradient2020, liepeMaximizingInformationContent2013}. Classical BED typically targets uncertainty about model parameters or latent variables, whereas recent goal-oriented or decision-focused extensions optimize experiments according to their impact on a downstream decision~\cite{kandasamyMyopicPosteriorSampling2019, huangAmortizedBayesianExperimental2024, zhongGoalOrientedBayesianOptimal2026}. In dynamical systems, control-oriented experiment design specializes this approach to select experiments according to their effect on closed-loop performance~\cite{andersonControlOrientedIdentificationStochastic2024, ebadatModelPredictiveControl2017}. Our work builds on this decision-oriented perspective, but focuses on settings inspired by synthetic biology.

Experimental design has already been used in systems biology for parameter estimation and model discrimination~\cite{liepeMaximizingInformationContent2013, flassigOptimalDesignStimulus2012, busettoNearoptimalExperimentalDesign2013a}, and in synthetic biology to distinguish mechanistic models or improve gene-regulatory-network predictions~\cite{bandieraOptimallyDesignedModel2020, braniffOptimalExperimentalDesign2019}. These works primarily target model or parameter identification. Here, multiple plausible models may induce the same optimal controller, so we instead design experiments around uncertainty that affects controller selection.

\section{Biological control}\label{sec:biological-control}

In this paper, we study biological control systems where the objective is to regulate the concentration of biological species (e.g. proteins) over time. Control is implemented through the introduction of additional biological species that interact and modulate the system.

Formally, let $x \in \mathbb{R}^n_{\ge 0}$ denote the concentration of species of a biological system (the \emph{plant}), whose dynamics are described by a function $f$. Control in this system can be introduced through engineered species with concentrations $z\in \mathbb{R}^m_{\ge 0}$, which interact with $x$ and whose dynamics are governed by a function $g$. The combined system dynamics can be expressed as follows: 

\begin{equation} \label{eq:bio-dynamics}
    \begin{aligned}
    \dot{x}(t) &= f(x(t), z(t), d(t)), \\
    \dot{z}(t) &= g(x(t), z(t)),
    \end{aligned}
\end{equation}
where $f:\mathbb{R}^n\times\mathbb{R}^m\times\mathbb{R}^l\rightarrow\mathbb{R}^n$ captures the dynamics of the fixed biological system and $g:\mathbb{R}^n\times\mathbb{R}^m\rightarrow\mathbb{R}^m$ defines the controller dynamics. $d(t) \in \mathbb{R}^l$ is a perturbation signal representing environmental effects on the plant.

Unlike classical formulations where the control input is externally imposed, here control is realized through the choice of controller dynamics $g$, whose species $z$ interact with $x$. We refer to systems of this form as \emph{biological feedback control systems}.

\textbf{Example} Consider a biological system with species $x=[A,I]$, where $I$ is a harmful damage marker, and $A$ is a therapeutic molecule that reduces $I$. This system, with dynamics described by $f$, represents the plant. A biological controller may be an engineered genetic circuit with species $z$ and dynamics $g$. The circuit can be engineered to produce $A$ as a response to $I$,  forming the closed-loop dynamics in~\eqref{eq:bio-dynamics}.

\section{Problem formulation} \label{sec:problem-formulation}

\subsection{Constraint-free objective}
We consider the problem of selecting a biological controller for a dynamical system as described in Sec.~\ref{sec:biological-control}. Let $\mathcal{C}$ denote a finite set of candidate biological controllers, where each $c\in \mathcal{C}$ induces fixed controller dynamics $g_c$.

Let $J(f,g)$ denote a function that evaluates the closed-loop trajectories obtained by integrating Eq.~(\ref{eq:bio-dynamics}), according to a specified control objective (e.g. regulate the quantity of a protein) where higher $J$ means better performance. Our goal is to identify the controller $c\in \mathcal{C}$ that maximizes $J(f,g_c)$. Formally,

\begin{equation} \label{eq:control}
c^\star = \arg\max_{c \in \mathcal{C}} J(f,g_c).
\end{equation}

\textbf{Example (continued)} 
Consider a system with two species $x=[A,I]$ governed by dynamics $f$. We want to find a controller $c\in\mathcal{C}$ with dynamics $g_c$ that minimizes the concentration $I$ by producing $A$, which we express as maximizing the function $J(f,g_c)=-\int_0^T I(t)dt$ penalizing the cumulative concentration of $I$. $\mathcal{C}$ represents the set of implementable genetic circuits.

\subsection{Constrained objective} \label{sec:constrained-objective}

We consider the setting where the system dynamics $f$ are unknown. Instead, we are given a finite set of candidate models $\mathcal{H}$, each representing a possible hypothesis for the dynamics. Assume that there exists a hypothesis $h^\star \in \mathcal{H}$ representing the true dynamics such that $f:=h^\star$.

Information about the system is obtained through experiments. Let $\mathcal{E}$ denote the set of feasible experiments, where each experiment $e \in \mathcal{E}$ is defined by a tuple $e = (x_0, d)$ consisting of (i) an initial condition $x_0$, and (ii) an externally imposed perturbation $d(t)$, such as experimentally adding one of the species at a prescribed rate. Given a system $f$ and a controller $c$, executing $e$ produces a set of observations $y =\{y(\tau_i)\}_{i=1}^{T_O}$ from the resulting closed-loop system formed by $f$ and $g_c$, corresponding to (possibly noisy and partial) measurements of selected variables at discrete time points. 

Experiments are performed sequentially. At each round $k$, an experiment $e_k \in \mathcal{E}$ is selected and executed. Based on the observations collected up to round $k$, we update our belief $\pi_k$ over the candidate models, representing our current uncertainty about which model $h \in \mathcal{H}$ describes the system. We then select a controller that maximizes the expected performance based on the current belief $\pi_k$:
\begin{equation}
\hat{c}_k = \arg\max_{c \in \mathcal{C}} \mathbb{E}_{h \sim \pi_k}[J(h, g_c)].
\end{equation}

Let $\rho$ denote the round at which the procedure stops. Our objective is:
\begin{equation}
    \min \rho \quad \text{s.t.} \quad \hat c_\rho = c^\star,
\end{equation}
where $c^\star$ is the optimal controller for the true system obtained as in Eq.~\ref{eq:control}.

\textbf{Example (continued)} Continuing the example above, suppose there are two candidate models for the system dynamics $\mathcal{H} = \{h_0, h_1\}$, where $h_0$ assumes that $A$ is always beneficial, while $h_1$ captures a regime in which $A$ becomes harmful at high $I$. The controller set $\mathcal{C}$ contains $c_0$, which produces $A$ proportionally to $I$, and $c_1$, which produces $A$ only within a specified range of $I$. The experimental library $\mathcal{E}$ is defined by combining two initial conditions (low and high $I$), and three external disturbances externally adding $I$ at different rates, resulting in $6$ possible experiments ($x_0$, $d$). Our objective is to identify a controller $c \in \mathcal{C}$ that minimizes $I$ under the true system dynamics using as few experiments as possible.
\section{Methods} \label{sec:methods}

\begin{figure*}
    \centering
    \includegraphics[width=\linewidth]{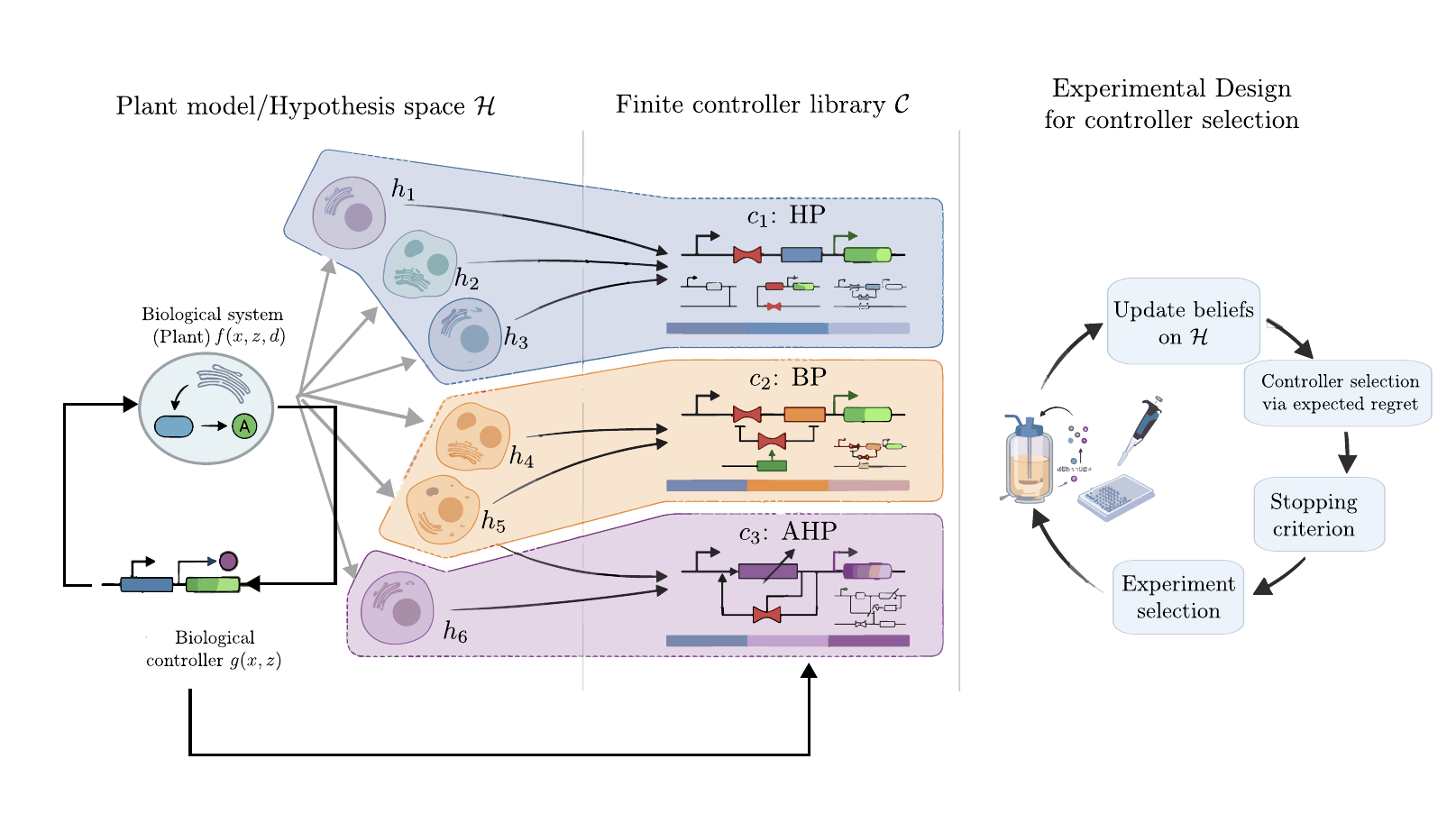}
    \caption{\textbf{Overview of the proposed framework}. \textbf{Left}: A biological system with unknown dynamics $f$ is described by a finite set of candidate hypotheses $\mathcal{H}$. \textbf{Middle}: A finite library $ \mathcal{C}$ defines the implementable genetic controllers. Multiple hypotheses may share the same optimal controller. \textbf{Right}: Experiments are selected to reduce posterior expected controller regret, stopping once remaining model uncertainty has limited impact on controller performance.}
    \label{fig:overview}
\end{figure*}

We propose a framework to address the problem stated in Sec.~\ref{sec:constrained-objective} that alternates between (i) inference over plausible system dynamics, (ii) controller selection under uncertainty, and (iii) experiment design to reduce decision-relevant uncertainty, as shown in Fig.~\ref{fig:overview} and Alg.~\ref{alg:controller-selection}.

\subsection{Belief over the hypothesis set}
At each experimental round $k$, we update a belief $\pi_k(h)$ over the set of candidate models $h\in\mathcal{H}$ for the true dynamical system. Each element $h \in \mathcal{H}$ defines a dynamical function that can be evaluated in simulation.

Let $\mathcal{D}_k = \{(y_j, c_j, e_j)\}_{j=1}^k$ denote the dataset of observations collected up to round $k$. Each observation $y_j = \{y_j(\tau_i)\}_{i=1}^{T_O}$ corresponds to measurements obtained by applying controller $c_j$ under experiment $e_j$, at discrete time points $\{\tau_i\}_{i=1}^{T_O}$.

We assume that observations from different experimental rounds are conditionally independent given the hypothesis $h$ and the applied controller–experiment pairs $(c_j,e_j)$. The posterior over models is then updated using Bayes' rule:
\begin{equation} \label{eq:model-posterior}
    \pi_k(h) \propto p(\mathcal{D}_k \mid h)\, \pi_0(h),
\end{equation}
where $p(\mathcal{D}_k \mid h) = \prod_{j=1}^k p(y_j \mid h, c_j, e_j)$. We assume the likelihood $p(y_j\mid h,c_j,e_j)$ can be evaluated for each hypothesis, controller, and experiment.

\subsection{Controller selection via expected regret}

Under model uncertainty, selecting a controller based on a single model can lead to suboptimal performance. Instead, we evaluate controllers based on their expected regret under the posterior $\pi_k(h)$.

For a given model $h \in \mathcal{H}$, we define the model's optimal controller as $c^{\mathrm{opt}}(h) = \argmax_{c' \in \mathcal{C}} J(h, g_{c'})$. Then, we define the regret of choosing an alternative controller $c \in \mathcal{C}$ as the difference in performance between the optimal and the alternative controller:
\begin{equation} \label{eq:regret}
    r(c, h) = J(h, g_{c^{\mathrm{opt}}(h)}) - J(h, g_c),
\end{equation}

The posterior expected regret of controller $c$ at round $k$ is then:
\begin{equation} \label{eq:expected-regret}
    R_k(c) = \mathbb{E}_{h \sim \pi_k} [r(c, h)] = \sum_{h \in \mathcal{H}} \pi_k(h)\, r(c, h).
\end{equation}

Then, the controller selected at round $k$ is the controller that minimizes the expected regret:
\begin{equation} \label{eq:argmin-regret}
    \hat{c}_k = \argmin_{c \in \mathcal{C}} R_k(c)
\end{equation}

This criterion accounts not only for which controller is optimal under each model, but also for the performance degradation incurred when selecting a suboptimal controller.

\subsection{Stopping criterion}

Since the true optimal controller $c^\star$ is unknown, the constraint $\hat c_\rho = c^\star$ cannot be verified directly. Instead, we terminate the algorithm at round $k$ when the minimum expected regret is below a predefined threshold $\varepsilon$, ensuring that any remaining model uncertainty has a limited impact on the chosen controller.

\begin{equation} \label{eq:stopping}
    \min_{c \in \mathcal{C}} R_k(c) \le \varepsilon.
\end{equation}

\subsection{Experiment selection via regret reduction}

When the stopping criterion is not met, we select the experiment with the largest expected reduction in minimum regret. Let $\hat{c}_k=\arg\min_cR_k(c)$ denote the current controller. For each $e\in\mathcal E$, we approximate its expected outcome via Monte Carlo sampling.

Specifically, we sample outcomes $y$ from the predictive distribution $p(y \mid e, \hat{c}_k, \pi_k) = \sum_{h\in\mathcal H} \pi_k(h) p(y\mid h,\hat{c}_k,e)$ by sampling a model $h\sim\pi_k$ and then obtaining $y\sim p(y\mid h,\hat{c}_k,e)$. For each sampled outcome $y$, we update the posterior $\pi_{k+1}(h \mid y,\hat{c}_k,e)$ as in Eq.~\ref{eq:model-posterior} and recompute the expected regret $R_{k+1}(c\mid y, \hat{c}_k, e)$ as in
Eq.~\ref{eq:expected-regret}. The utility of an experiment is then defined as:
\begin{equation} \label{eq:exp-utility}
    U_k(e) = \min_{c \in \mathcal{C}} R_k(c)
    -
    \mathbb{E}_{y \sim p(y \mid e, \hat{c}_k, \pi_k)}
    \left[
        \min_{c \in \mathcal{C}} R_{k+1}(c \mid y,\hat{c}_k, e)
    \right]
\end{equation}

The next experiment is then chosen as $e^\star_k = \argmax_{e \in \mathcal{E}} U_k(e)$, which prioritizes experiments that are most expected to reduce the risk of selecting a suboptimal controller.

\begin{algorithm}[t]
\caption{Experimental design for controller selection}
\label{alg:controller-selection}
\begin{algorithmic}[1]
\Require Hypotheses $\mathcal H$, prior $\pi_0$, controller set $\mathcal C$,
experiment set $\mathcal E$, threshold $\epsilon$

\State Initialize dataset $\mathcal D \gets \emptyset$
\State For each $h\in\mathcal H$: 
$c^{\mathrm{opt}}(h)\gets\argmax_{c\in\mathcal C}J(h,g_c)$

\While{true}
    \State Update posterior $\pi(h)$ from $\mathcal D$ using Eq.~\eqref{eq:model-posterior}
    \State Get expected regret $R(c)$ for all $c\in\mathcal C$
    with Eq.~\eqref{eq:expected-regret}
    \State Select controller
    $\hat c\gets\arg\min_{c\in\mathcal C}R(c)$

    \If{$R(\hat c)\leq\epsilon$}
        \State \Return $\hat c$
    \EndIf

    \State Compute $U(e)$ for all $e\in\mathcal E$
    with Eq.~\eqref{eq:exp-utility}
    \State Select experiment
    $e^\star\gets\arg\max_{e\in\mathcal E}U(e)$
    \State Execute $(e^\star,\hat c)$ and observe $y$
    \State Update
    $\mathcal D\gets\mathcal D\cup\{(y,\hat c,e^\star)\}$
\EndWhile

\end{algorithmic}
\end{algorithm}

\subsection{Conditional stopping guarantee}

Let us extend Eqs.~\eqref{eq:expected-regret}~and~\eqref{eq:argmin-regret} to an arbitrary posterior $\pi$: 

\begin{equation} \label{eq:regret-by-posterior}
    R^\star(\pi)= \min_{c\in \mathcal{C}}\sum_{h\in \mathcal{H}}\pi(h)r(c,h)
\end{equation}

so that $R^\star(\pi_k)\equiv R_k(\hat{c}_k)$. Similarly, let $U_\pi(e)$ denote the utility in Eq.~\eqref{eq:exp-utility} with posterior $\pi$, so that $U_{\pi_k}(e) \equiv U_k(e)$.

\begin{lemma}
    Assume finite $\mathcal H,\mathcal C$, and finite-valued $J$. Then $R^\star(\pi)$ is concave in $\pi$. 
\end{lemma}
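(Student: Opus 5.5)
The plan is to write $R^\star$ as a pointwise minimum of linear functions of $\pi$ and then use the fact that such a minimum is concave. Throughout, $\pi$ ranges over the probability simplex $\Delta(\mathcal H)=\{\pi\in\mathbb{R}^{|\mathcal H|}_{\ge 0}:\sum_h \pi(h)=1\}$, which is convex, so concavity on it makes sense.

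\textbf{Step 1.} The regrets are well-defined finite constants. Since $\mathcal C$ is finite and $J$ is finite-valued, for each $h\in\mathcal H$ the maximizer $c^{\mathrm{opt}}(h)$ exists; if there are ties, any maximizer gives the same value. Hence $r(c,h)=J(h,g_{c^{\mathrm{opt}}(h)})-J(h,g_c)$ is a finite real number for every pair $(c,h)$, and it does not depend on $\pi$.

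\textbf{Step 2.} For each fixed $c\in\mathcal C$, define $L_c(\pi)=\sum_{h\in\mathcal H}\pi(h)\,r(c,h)$. This is a finite sum, because $\mathcal H$ is finite, with constant coefficients. So $L_c$ is linear in $\pi$, and for $\pi_1,\pi_2\in\Delta(\mathcal H)$ and $\lambda\in[0,1]$,
\[
L_c(\lambda\pi_1+(1-\lambda)\pi_2)=\lambda L_c(\pi_1)+(1-\lambda)L_c(\pi_2).
\]

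\textbf{Step 3.} By Eq.~\eqref{eq:regret-by-posterior}, $R^\star(\pi)=\min_{c\in\mathcal C}L_c(\pi)$, and the minimum is attained because $\mathcal C$ is finite. Using Step 2 and the inequalities $L_c(\pi_i)\ge R^\star(\pi_i)$, we get for every $c$
\[
L_c(\lambda\pi_1+(1-\lambda)\pi_2)=\lambda L_c(\pi_1)+(1-\lambda)L_c(\pi_2)\ge \lambda R^\star(\pi_1)+(1-\lambda)R^\star(\pi_2).
\]
Taking the minimum over $c$ on the left-hand side gives
\[
R^\star(\lambda\pi_1+(1-\lambda)\pi_2)\ge \lambda R^\star(\pi_1)+(1-\lambda)R^\star(\pi_2),
\]
which is the concavity claimed in the Lemma.

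\textbf{Main obstacle.} There is no real difficulty here. The only point needing care is Step 1: we must confirm that $r(c,h)$ does not depend on $\pi$ and is finite, so that each $L_c$ is genuinely linear. This is exactly where the finiteness assumptions on $\mathcal H$, $\mathcal C$ and $J$ are used. As a remark, the same argument shows $R^\star$ is piecewise linear and continuous on the simplex; this will be useful later when we show that $U_\pi(e)\ge 0$ by Jensen's inequality applied to the posterior martingale.
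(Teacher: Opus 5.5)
Your proof is correct and takes the same route as the paper: $R^\star$ is the pointwise minimum, over the finite set $\mathcal C$, of functions that are linear in $\pi$, and such a minimum is concave. You also write out two details the paper leaves implicit, namely that $r(c,h)$ is a finite constant independent of $\pi$ and the explicit convex-combination inequality.
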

\begin{proof}
   $R^\star$ is the pointwise minimum of $\sum_h \pi(h)r(c,h)$, which is linear in $\pi$ for each $c$, and therefore it is concave. 
\end{proof}

\begin{proposition}
    Under exact Bayesian updating, $U_\pi(e)\geq 0$ for every $e\in\mathcal E$.
\end{proposition}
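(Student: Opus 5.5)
The plan is to combine the concavity of $R^\star$ from the Lemma with the martingale property of exact Bayesian posteriors, via Jensen's inequality. Fix a posterior $\pi$, the controller $\hat c$ selected under $\pi$, and an experiment $e$. Write $\pi'_y$ for the exactly updated posterior after observing $y$, that is, $\pi'_y(h) = \pi(h)\,p(y\mid h,\hat c,e)/p(y\mid e,\hat c,\pi)$. With this notation the utility in Eq.~\eqref{eq:exp-utility} is
\begin{equation*}
U_\pi(e) = R^\star(\pi) - \mathbb{E}_{y\sim p(y\mid e,\hat c,\pi)}\big[R^\star(\pi'_y)\big],
\end{equation*}
since the inner minimum in Eq.~\eqref{eq:exp-utility} is exactly $R^\star(\pi'_y)$ by Eq.~\eqref{eq:regret-by-posterior}. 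It therefore suffices to show $\mathbb{E}_y[R^\star(\pi'_y)]\le R^\star(\pi)$.

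\textbf{Step 1 (martingale property).} I will show that the predictive average of the updated posteriors returns the prior. For each $h$,
\begin{equation*}
\mathbb{E}_y[\pi'_y(h)] = \int p(y\mid e,\hat c,\pi)\,\frac{\pi(h)\,p(y\mid h,\hat c,e)}{p(y\mid e,\hat c,\pi)}\,dy = \pi(h)\int p(y\mid h,\hat c,e)\,dy=\pi(h).
\end{equation*}
This uses that $y$ is drawn from the same mixture predictive $\sum_h\pi(h)p(y\mid h,\hat c,e)$ that normalizes the update. This is where the hypothesis of exact Bayesian updating enters. Outcomes $y$ with zero predictive density are a null set and can be ignored.

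\textbf{Step 2 (Jensen).} Since $\mathcal H$ is finite, posteriors lie in the probability simplex, a compact convex subset of $\mathbb{R}^{|\mathcal H|}$. By the Lemma, $R^\star$ is concave there. It is also bounded, because it is a minimum of finitely many linear functions with finite coefficients $r(c,h)$; hence every expectation involved exists. Jensen's inequality for concave functions then gives
\begin{equation*}
\mathbb{E}_y[R^\star(\pi'_y)] \le R^\star\big(\mathbb{E}_y[\pi'_y]\big) = R^\star(\pi),
\end{equation*}
so $U_\pi(e)\ge 0$.

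The only delicate point is justifying Jensen for a possibly continuous observation distribution. A direct argument avoids measure-theoretic Jensen altogether. For each fixed $c$, linearity and Step 1 give $\mathbb{E}_y\big[\sum_h \pi'_y(h) r(c,h)\big]=\sum_h\pi(h)r(c,h)$. Since $R^\star(\pi'_y)\le \sum_h\pi'_y(h)r(c,h)$ for every $c$, taking expectations and then the minimum over $c$ yields the claim. This version needs only boundedness of $r$, which follows from finite-valued $J$. I would present this elementary route as the main proof. I would also remark that Monte Carlo estimates of $U$ can be negative, which is why the statement assumes exact updating and exact expectation.
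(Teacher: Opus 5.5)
Your proof is correct and follows the paper's argument. The paper also combines the martingale property of the exact posterior, $\mathbb{E}_y[\pi'_y]=\pi$, with concavity of $R^\star$ and Jensen's inequality to get $\mathbb{E}_y[R^\star(\pi'_y)]\le R^\star(\pi)$. Your ``fix $c$, average, then minimize'' variant is just Jensen unpacked for a pointwise minimum of linear functions ($\mathbb{E}\min\le\min\mathbb{E}$); it usefully spells out the martingale computation, which the paper only asserts, and avoids measure-theoretic subtleties for continuous observations.
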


\begin{proof}
    Let $\pi'$ denote the posterior after experiment $e$. Since $\mathbb E[\pi'\mid\pi,e]=\pi$, Jensen's inequality and concavity of $R^\star$ give $\mathbb E[R^\star(\pi')\mid\pi,e]\le R^\star(\pi)$, and hence \(U_\pi(e)\ge0\).
\end{proof}

Let us define $ \delta_\epsilon=
\inf_{\pi:R^\star(\pi) > \epsilon}\max_{e\in\mathcal E}U_\pi(e)$, as the worst-case expected improvement of the best experiment before the stopping condition. 
\\
\begin{proposition}
    Under exact evaluation of $U_\pi(e)$, if $\delta_\epsilon>0$, the stopping round~$\tau_{\epsilon}=\inf\{k\geq 0:R^\star(\pi_k)\leq\epsilon\}$~reached by greedy acquisition with repeatable experiments satisfies $\mathbb E[\tau_\epsilon]\leq \frac{R^\star(\pi_0)}{\delta_\epsilon}$.   
\end{proposition}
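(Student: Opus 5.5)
We argue with a supermartingale drift bound and optional stopping. Set $M_k = R^\star(\pi_k)$, which is nonnegative: each $r(c,h)\ge 0$ by definition of $c^{\mathrm{opt}}(h)$, so $R^\star(\pi)\ge 0$ for every posterior $\pi$. Let $\mathcal F_k$ be the $\sigma$-algebra generated by $\mathcal D_k$, so that $\pi_k$ is $\mathcal F_k$-measurable.

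\textbf{Step 1: one-step expected decrease.} On the event $\{\tau_\epsilon > k\}$ we have $R^\star(\pi_k) > \epsilon$. Greedy acquisition picks $e_k^\star = \arg\max_e U_{\pi_k}(e)$, and $U$ is evaluated exactly. Under exact Bayesian updating, the next outcome $y$ is drawn from the predictive distribution $p(y\mid e_k^\star,\hat c_k,\pi_k)$, so
\begin{equation*}
\mathbb E[M_{k+1}\mid \mathcal F_k] = M_k - U_{\pi_k}(e_k^\star) \le M_k - \delta_\epsilon .
\end{equation*}
The inequality uses the definition of $\delta_\epsilon$ as an infimum over all $\pi$ with $R^\star(\pi)>\epsilon$. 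We need "repeatable experiments" here: $\mathcal E$ does not shrink across rounds, so the maximum in $\delta_\epsilon$ is always over the same set and the bound holds at every round. We also need the predictive distribution used in $U$ to match the actual data-generating law. This is the Bayesian (prior-averaged) reading of the expectation $\mathbb E[\tau_\epsilon]$, with $h^\star\sim\pi_0$. I will state this interpretation explicitly, since the bound is about the expectation under the prior predictive, not under a fixed $h^\star$.

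\textbf{Step 2: telescoping and optional stopping.} Define $S_k = M_{k\wedge\tau_\epsilon} + \delta_\epsilon\,(k\wedge\tau_\epsilon)$. Step 1 shows $S_k$ is a supermartingale, so $\mathbb E[S_k]\le S_0 = M_0 = R^\star(\pi_0)$. Since $M\ge 0$, this gives
\begin{equation*}
\delta_\epsilon\,\mathbb E[k\wedge\tau_\epsilon]\le R^\star(\pi_0)\quad\text{for all } k .
\end{equation*}
Letting $k\to\infty$, monotone convergence yields $\mathbb E[\tau_\epsilon]\le R^\star(\pi_0)/\delta_\epsilon$. As a byproduct, $\tau_\epsilon<\infty$ almost surely. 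Working with the truncated time $k\wedge\tau_\epsilon$ avoids needing any a priori integrability of $\tau_\epsilon$.

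\textbf{Main obstacle: Step 1.} The identity $\mathbb E[R^\star(\pi_{k+1})\mid\mathcal F_k]=R^\star(\pi_k)-U_{\pi_k}(e_k^\star)$ requires that the controller and experiment actually executed match those used in Eq.~\eqref{eq:exp-utility}, namely $\hat c_k$ and $e^\star_k$. Alg.~\ref{alg:controller-selection} guarantees this. It also requires the measurability and conditioning conventions above: the expectation is conditioned on $\mathcal F_k$, which is the same as conditioning on the current posterior $\pi_k$. Once this step is settled, Step 2 is routine.
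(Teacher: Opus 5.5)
Your proof is correct and follows the same route as the paper: greedy selection gives a per-round expected decrease $U_{\pi_k}(e_k^\star)\ge\delta_\epsilon$ on $\{k<\tau_\epsilon\}$, which is summed up to $n\wedge\tau_\epsilon$ to get $\delta_\epsilon\,\mathbb E[n\wedge\tau_\epsilon]\le R^\star(\pi_0)$, and then $n\to\infty$. Your supermartingale version spells out three things the paper leaves implicit: that $R^\star\ge 0$, that the limit is taken by monotone convergence, and that $\mathbb E$ is the prior-predictive (Bayesian) measure, which is the only measure under which $\mathbb E[R^\star(\pi_{k+1})\mid\mathcal F_k]=R^\star(\pi_k)-U_{\pi_k}(e_k^\star)$ holds (the paper's Proposition 1 relies on the same convention through $\mathbb E[\pi'\mid\pi,e]=\pi$).
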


\begin{proof}
    For every $k<\tau_\epsilon$, $R^\star(\pi_k)>\epsilon$, so by definition of $\delta_\epsilon$ and greedy selection, the expected regret decreases by at least $\delta_\epsilon$ per non-terminal round: $U_{\pi_k}(e_k^\star)\geq\delta_\epsilon$. Summing these decreases up to round $\tau_\epsilon$ gives $\delta_\epsilon \mathbb E[n\wedge\tau_\epsilon] \leq R^\star(\pi_0)$. With $n\to\infty$, $\mathbb E[\tau_\epsilon] \leq R^\star(\pi_0) / \delta_\epsilon$. 
\end{proof}

\section{Experiments}

We evaluate whether the method proposed in Sec.~\ref{sec:methods} can identify the optimal controller from a finite library using few experimental rounds. 

\subsection{Benchmark cases}

We consider three forms of model mismatch: amplitude-dependent effects, context-dependent interactions, and temporal adaptation. Each $H_i$ is a mechanistic family containing multiple models with different parameterizations, with $\mathcal H=\bigcup_i H_i$.

Across benchmarks, the objective is to reduce the damage marker $I$ while limiting therapeutic effort $A$, expressed as $J=-\int (I(t)^2 + A(t)^2) dt $. The controller library contains high-pass (HP), band-pass (BP), adaptive high-pass (AHP) which attenuates output under sustained activation, and multi-input variants of these motifs. For each benchmark, the initial condition $x_0$ is fixed, and experiments differ only in the externally applied disturbance $d(t)$, which enters the dynamics as an additive input to the damage species $I$. In Case 1, $\mathcal{E}$ contains seven step disturbances with different amplitudes and onset times. Case 2 contains ten step disturbances. Case 3 contains eight step and pulse disturbances with different onset times. Each experiment $e=(x_0,d)$ starts from $x_0$, applies an external perturbation $d(t)$ and observes $I$ over $T=20$ at $t=0,10,20$. Observations are corrupted with Gaussian noise $\eta\sim\mathcal N(0,0.09^2)$. We set the stopping threshold to $\epsilon=0.05$.

We compare our acquisition policy with model-uncertainty and random baselines. All methods share the posterior update, controller-selection rule, and stopping criterion. The model-uncertainty baseline maximizes expected reduction in hypothesis entropy $H(\pi)=-\sum_h\pi(h)\log\pi(h)$, while random selection samples uniformly from $\mathcal E$.

\paragraph{Case 1: Amplitude-dependent mismatch}

This case captures settings where the therapeutic $A$ becomes harmful when damage $I$ is very high. Biologically, this can occur in inflammatory systems, where an anti-inflammatory response may interfere with endogenous protective mechanisms during severe inflammation. $H_0$ assumes treatment is always beneficial, favoring HP control. $H_1$ assumes treatment remains beneficial but with reduced benefit under sustained activation, so the treatment penalty favors AHP control. $H_2$ assumes treatment becomes harmful at high $I$, which corresponds to the true plant, and favors BP control. The hypotheses agree at low $I$ and diverge at high amplitudes, so informative experiments must probe this regime.

\paragraph{Case 2: Context-dependent response}

This case represents systems where treatment response depends on an unobserved biological species $B$. The controller does not directly sense the damage marker $I$, but instead observes a proxy $P$. At low $B$, the therapeutic molecule $A$ suppresses $P$ and $I$. At high $B$, however, $A$ becomes harmful, increasing $I$ in a way that is not reflected in $P$. Thus, for the same observed $P$, treatment with $A$ may be beneficial or harmful depending on $B$. $H_0$ assumes no effect from $B$, favoring HP control. $H_1$ assumes that $B$ increases damage (both $P$ and $I$), so stronger HP control remains effective. $H_2$ correctly assumes that $B$ affects the efficacy of $A$, but that $B$ can be inferred indirectly through $P$, favoring BP control on $P$ alone. $H_3$ assumes that the effect of $B$ cannot be inferred from $P$ alone, favoring a two-input controller with HP control on $P$ and LP control on $B$. The true plant belongs to $H_3$. The hypotheses differ only when contextual effects are active, so informative experiments must reveal the influence of $B$.

\paragraph{Case 3: Dynamical adaptation mismatch}
This case captures adaptive biological responses in which sustained treatment induces tolerance, reducing treatment effectiveness over time. Such behavior is common in biology, for example in drug resistance. $H_0$ assumes no buildup of resistance, favoring HP control. $H_1$ includes tolerance dynamics with different strengths and persistence, favoring AHP control. The true plant belongs to $H_1$. The hypotheses differ mainly in their temporal response rather than steady-state behavior.

\subsection{Results}

\begin{figure}
    \centering
    \includegraphics[width=1\linewidth]{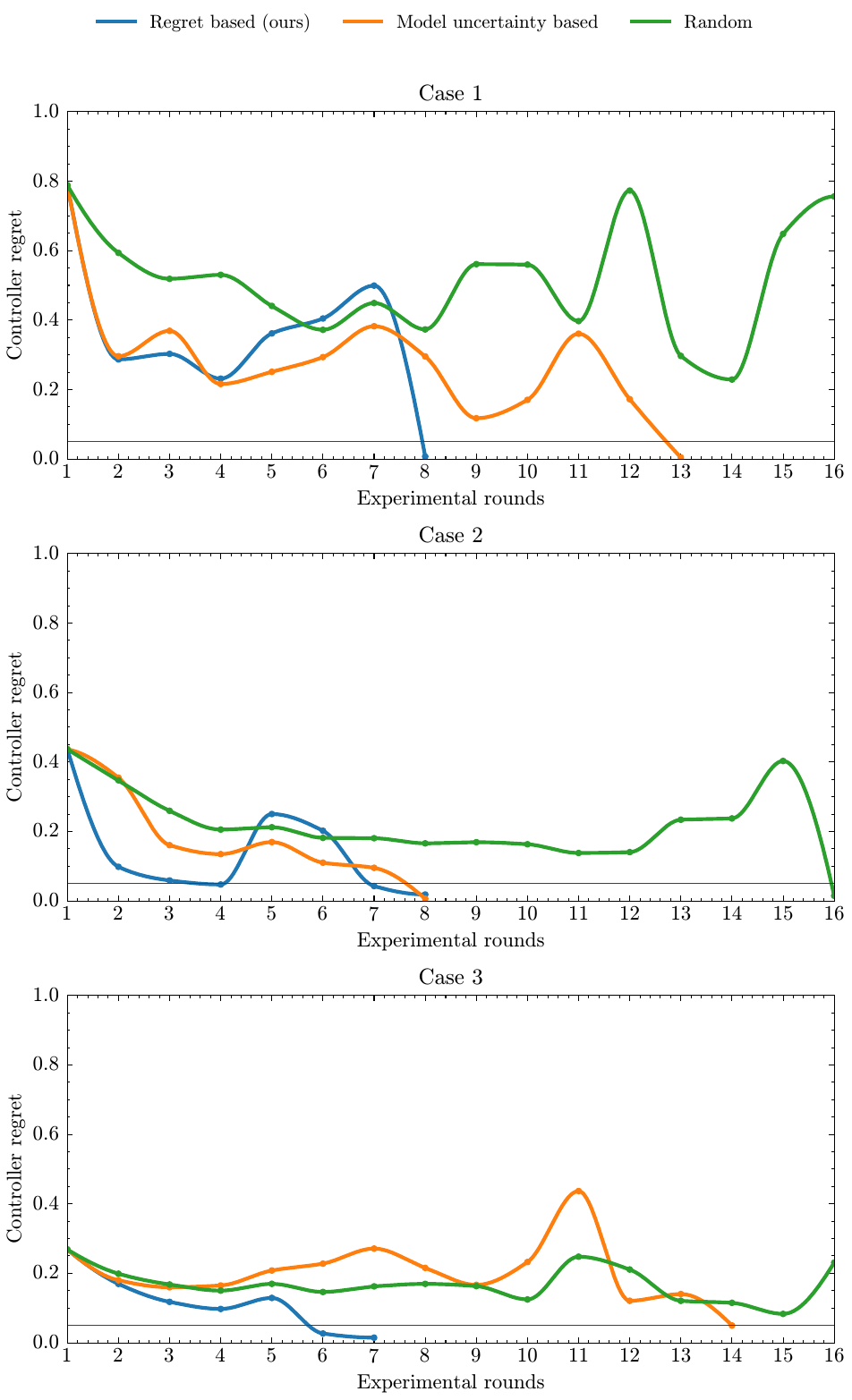}
    \caption{Controller regret across experimental rounds for the three benchmark cases. Curves show the mean over 50 random seeds. Methods differ only in the criteria for experiment selection (blue: minimize controller regret, ours; orange: minimize model uncertainty, green: random)}
    \label{fig:results-figure}
\end{figure}

Fig.~\ref{fig:results-figure} shows the evolution of controller regret across rounds in all three cases averaged over $50$ random seeds. Table~\ref{tab:results-table} reports the average number of rounds required to reach the stopping criterion, with the corresponding success rates. Success requires both reaching a controller regret below $0.05$ and selecting the correct optimal controller. 

The proposed method requires fewer rounds than the model-uncertainty baseline in all three cases while achieving similar success rates. For each comparison, runs generated from the same seed were paired when both methods succeeded. Two-sided paired \(t\)-tests were applied to the number of rounds, with Holm correction across all nine comparisons. Our method required significantly fewer rounds than the model-uncertainty baseline in all three cases, with a mean reduction of $0.57$ (95\% CI: $[0.07,1.08]$, $p_{\mathrm{adj}}=0.042$) in Case 1, $1.39$ ($[0.92,1.85]$, $p_{\mathrm{adj}}<0.001$) in Case 2, and $0.90$ ($[0.19,1.60]$, $p_{\mathrm{adj}}=0.042$) in Case 3. Random selection was also significantly slower than the proposed method, requiring $5.62$, $5.66$, and $5.81$ rounds on average in Cases~1--3.

\begin{table}[t]
\centering
\begin{threeparttable}
\caption{Convergence comparison across methods and 50 seeds}
\label{tab:results-table}
\begin{tabular}{llccc}
\toprule
Case & Method & Avg. Rounds\tnote{*} & Success \\
\midrule
\multirow{3}{*}{Case 1} 
 & Our Method        & 3.14 & 50/50 &\\
 & Model Uncertainty & 3.65 & 49/50 &\\
 & Random            & 5.62 & 48/50 \\
\midrule
\multirow{3}{*}{Case 2} 
 & Our Method        & 2.86 & 49/50 \\
 & Model Uncertainty & 4.24 & 49/50 \\
 & Random            & 5.66 & 47/50 \\
\midrule
\multirow{3}{*}{Case 3} 
 & Our Method        & 3.59 & 49/50 \\
 & Model Uncertainty & 4.50 & 50/50 \\
 & Random            & 5.81 & 48/50 \\
\bottomrule
\end{tabular}
\begin{tablenotes}
\footnotesize
\item[$^{*}$] Average only including successful runs.
\end{tablenotes}

\end{threeparttable}
\end{table}

\section{Discussion}

In this paper, we propose a framework that shifts experimental design from model identification to controller identification. When multiple hypotheses induce the same optimal controller, reducing model uncertainty may not affect the controller choice. By minimizing posterior expected regret, our approach prioritizes experiments that resolve controller ambiguity rather than global model uncertainty, and stops once the remaining uncertainty has only a limited impact on controller performance.

Several limitations should be acknowledged. The method optimizes only over the provided controller library, and therefore performance is limited by the best controller in $\mathcal{C}$, i.e. $\max_{c\in\mathcal C}J(h^\star,g_c)$. In addition, the finite-stopping guarantee requires $\delta_\epsilon>0$, so arbitrary choices of $\mathcal{E}$ may not guarantee stopping. The theoretical formulation also assumes that the hypothesis set contains the plant model, $h^\star\in\mathcal H$, which may not always hold, although controller selection may still be effective when candidate hypotheses capture the dynamics relevant to the controller choice.

As future work, extending the framework to adaptively refine the hypotheses and controller spaces remains an important direction. Experiment selection can also become computationally expensive as the hypothesis, controller, or experiment spaces grow, which is observed in related Bayesian design methods~\cite{zhongGoalOrientedBayesianOptimal2026, fosterUnifiedStochasticGradient2020}, so hypothesis pruning or reduced sampling may be needed at larger scales. Finally, the current one-step lookahead may miss informative multi-experiment strategies, motivating extensions to longer planning horizons.

\section{Acknowledgements}

This work was partially supported by the NSF under grants GCR 2219101, EFRI 2422282 
and by a Brendan Iribe Endowed Professorship.

\vfill 
\bibliographystyle{IEEEtran}
\bibliography{references}

\end{document}